\definecolor {processblue}{cmyk}{0.96,0,0,0}
\numberwithin{equation}{section}
\crefname{section}{§\!\!}{§§\!\!}
\Crefname{section}{§}{§§}
\crefname{appendix}{Appendix\!}{Appendices\!}
\crefname{figure}{Fig.\,\!\!}{Figs.\,\!\!}
\newtheorem{lemma}{Lemma}
\newtheorem{theorem}{Theorem}
\newtheorem{defn}{Definition}
\newtheorem{coro}{Corollary}
\newtheorem{rmk}{Remark}
\def\0{{(0)}}
\def\1{{(1)}}
\def\2{{(2)}}
\def\<{\langle }
\def\>{\rangle }
\newcommand{\bea}{\begin{eqnarray}}
\newcommand{\eea}{\end{eqnarray}}
\newcommand{\be}{\begin{equation}}
\newcommand{\ee}{\end{equation}}
\newcommand{\ba}{\begin{align}}
\newcommand{\ea}{\end{align}}
  \let\over=\@@over \let\overwithdelims=\@@overwithdelims
  \let\atop=\@@atop \let\atopwithdelims=\@@atopwithdelims
  \let\above=\@@above \let\abovewithdelims=\@@abovewithdelims
\renewcommand\section{\@startsection {section}{1}{\z@}%
                                   {-3.5ex \@plus -1ex \@minus -.2ex}%nn
                                   {2.3ex \@plus.2ex}%
                                   {\normalfont\large\bfseries}}
\renewcommand\subsection{\@startsection{subsection}{2}{\z@}%
                                     {-3.25ex\@plus -1ex \@minus -.2ex}%
                                     {1.5ex \@plus .2ex}%
                                     {\normalfont\bfseries}}
\newcommand{\beq}{\begin{equation}}
\newcommand{\eeq}{\end{equation}}
\newcommand{\beqa}{\begin{eqnarray}}
\newcommand{\eeqa}{\end{eqnarray}}
\newcommand{\beqar}{\begin{eqnarray*}}
\def\[{\big[}
\def\]{\big]}
\def\min{\text{min}}
\def\a{\alpha}
\def\b{\beta}
\def\G{\Gamma}
\def\a{{\alpha}}
\def\b{{\beta}}
\def\t{{\theta}}
\def\be{{\bar \epsilon}}
\def\CA{{\mathcal A}}
\def\CC{{\mathcal C}}
\def\CH{{\mathcal H}}
\def\SB{{\mathscr B}}
\def\+{{(+)}}
\def\-{{(-)}}
\def\0{{(0)}}
\def\1{{(1)}}
\def\2{{(2)}}
\def\3{{(3)}}
\def\4{{(4)}}
\def\5{{(5)}}
\def\S{{\tt{S}}}
\def\k{{\tt{K}}}
\def\l{\lambda}
\def\N{{\sf N}}
\def\D{{\sf D}}
\def\R{{\sf R}}
\def\polyindfont#1{\mathscr{#1}}
\def\pI{\polyindfont{I}}
\def\pJ{\polyindfont{J}}
\def\pK{\polyindfont{K}}
\def\ii{{\underline{i}}}
\def\jj{{\underline{j}}}
\def\pII{{\underline{\pI}}}
\def\gI{\Gamma}
\def\SB#1{{\hat e}^{\, #1}}
\def\IB#1{{\hat f}^{\, #1}}
\def\KB#1{{\hat g}^{#1}}
\def\av{\omega}
\def\IQfont#1{{\mathbf #1}}
\def\Q{{\IQfont{Q}}}
\def\I{{\tt I}}
\definecolor{vecolor}{rgb}{0.7,0.3,0.9}
\definecolor{rust}{rgb}{0.8,0.2,0.2}
\def\shadeI{\cellcolor{blue!5}}
\def\shadeK{\cellcolor{red!5}}
\title{\boldmath 
Superbalance of Holographic Entropy Inequalities}
\author[]{Temple He}
\author[]{, Veronika E.\ Hubeny}
\author[]{, Mukund Rangamani}
\affiliation[]{Center for Quantum Mathematics and Physics (QMAP)\\
Department of Physics, University of California, Davis, CA 95616 USA}
\emailAdd{tmhe@ucdavis.edu}
\emailAdd{veronika@physics.ucdavis.edu}
\emailAdd{mukund@physics.ucdavis.edu}
\abstract{ The domain of allowed von Neumann entropies of a holographic field theory carves out a polyhedral cone -- the holographic entropy cone -- in entropy space. Such polyhedral cones are characterized by their extreme rays.  For an arbitrary number of parties, it is known that the so-called perfect tensors are extreme rays. In this work, we constrain the form of the remaining extreme rays by showing that they correspond to geometries with vanishing mutual information between any two parties, ensuring the absence of Bell pair type entanglement between them. This is tantamount to proving that besides subadditivity, all non-redundant holographic entropy inequalities are superbalanced, i.e.\ not only do UV divergences cancel in the inequality itself (assuming smooth entangling surfaces), but also in the purification thereof.
}
\begin{document} 
\maketitle
\flushbottom

%~~~~~~~~~~~~~~~~~~~~~~~~~~~~~~~~~~~~~~~~~~~~~~~
\section{Introduction}
\label{sec:intro}
%~~~~~~~~~~~~~~~~~~~~~~~~~~~~~~~~~~~~~~~~~~~~~~

Entropy inequalities in quantum information theory are linear constraints on the von Neumann (entanglement)  entropies of the various subsystems
that give a useful characterization of the full system. For instance, given a factorizable Hilbert space $\CH_1 \otimes \CH_2$, the entropy of the combined system is constrained by those of the individual subsystems $\CH_1$ and $\CH_2$ via the subadditivity (SA) inequality
\begin{align}\label{sa}
	\S_1 + \S_2 \geq \S_{12}\ ,
\end{align}	
where $\S_i$ is the entanglement entropy associated to $\CH_i$, and $\S_{ij}$ is that associated to $\CH_i \otimes\CH_j$. Equivalently, \eqref{sa} can be expressed as the non-negativity of the mutual information
\begin{align}
	\I_{12} \equiv \S_1 + \S_2 - \S_{12}\ ,
\end{align}
which characterizes the total amount of correlation between $\CH_1$ and $\CH_2$. 
Understanding the general structure of such entropy inequalities provides intuition about the nature of admissible quantum states and thus remains of broad interest both in quantum mechanics and in continuum quantum field theories (QFTs).

Progress in this direction has however proven difficult, since for general quantum systems it remains unclear, even for those involving a small number of parties, what the complete set of the entropy inequalities are, or even whether there are finitely many. Moreover, in QFTs the computation of entanglement entropy is notoriously difficult.  In recent years, progress has been made by considering a subclass of  QFTs that are holographic conformal field theories (CFTs), which are dual to a gravitational theory in an asymptotically AdS spacetime.
For this class of theories, the computation of entanglement entropy is geometric; it is given by the area of an extremal (HRT) surface in the bulk  geometry  \cite{Ryu:2006bv,Hubeny:2007xt}. 
The geometric framework allows one to analyze entropic constraints and obtain a class of  holographic entropy inequalities (HEIs). Such inequalities are not universally true for an arbitrary quantum state, but they do  hold for any  `geometric state' in a holographic CFT (i.e.\ one with a geometric bulk dual).  This means that such relations can in turn be utilized to characterize geometric states. The simplest HEI is the monogamy of mutual information (MMI) and is given by \cite{Hayden:2011ag}
\begin{align}\label{mmi}
	\S_{12} + \S_{13} + \S_{23} \geq \S_1 + \S_2 + \S_3 + \S_{123}\ .
\end{align}
The combination of subsystem entropies appearing here is the same as that in topological entanglement entropy \cite{Kitaev:2005dm}, and is tantamount to the non-positivity of the tripartite information $\I_{123}$ (explicitly defined in \eqref{e:In}).

A systematic study of HEIs for an arbitrary number of parties was initiated by \cite{Bao:2015bfa}, and continued via a different approach in \cite{Hubeny:2018trv,Hubeny:2018ijt}. Specifically, consider a set of disjoint spatial regions in the boundary CFT, labeled by $\CA_1,\ldots,\CA_\N$. Given any holographic CFT state, the entanglement entropies involving the various collections of spatial regions can be collectively associated to a vector in entropy space, where the components are simply the entropies of all the composite subsystems.\footnote{
The actual values are infinite (they suffer from the area-law divergence) whenever the system of interest is bounded by an entangling surface on the boundary (as opposed to covering an entire spatial slice of the CFT).  Therefore, when necessary we introduce a UV cutoff, which renders the regulated entropies (as well as their ratios) cutoff-dependent. This issue was circumvented in \cite{Hubeny:2018trv,Hubeny:2018ijt} by working with  `proto-entropies' (corresponding to the HRT surfaces themselves rather than their areas) and analyzing their formal relations, and we leave the recasting of our results in terms of proto-entropies to future work.
} For instance, in the case $\N=3$, the entropy vector is
\begin{align}
\label{eq:Svec}
	\vec\S = (\S_1,\ \S_2,\ \S_3,\ \S_{12},\ \S_{13},\ \S_{23},\ \S_{123})\ ,
\end{align}
where $\S_i \equiv \S_{\CA_i}$, $\S_{ij} = \S_{\CA_i\cup \CA_j}$, and so forth. In general, given $\N$ parties, the entropy space has $\D=2^{\N}-1$ dimensions, and constraints such as SA or MMI carve out allowable half-spaces of the entropy space in which the entropy vectors can reside.  The authors of \cite{Bao:2015bfa} argued that the physically allowed region forms a convex cone in the entropy space, dubbed the holographic entropy cone (HEC). This was accomplished by restricting attention to the static context and using the minimal surface (RT) prescription of \cite{Ryu:2006bv} to recast the problem using graph theoretic language, so the problem of determining the HEC is in fact equivalent to finding HEIs. In the more general framework of \cite{Hubeny:2018ijt}, the intersection of all half-spaces determined by the {\it primitive}\footnote{
Primitive HEIs, defined in \cite{Hubeny:2018trv}, are ones that capture an independent type of correlation, which automatically eliminates the redundant HEIs such as strong subadditivity.  
} HEIs involving $\N$ parties was for clarity dubbed the {\it holographic entropy polyhedron} to emphasize that this construct is defined by the linear HEIs using proto-entropies as opposed to the allowed rays in the entropy space. Nevertheless, it is believed (and known in the static case for $\N \le 5$) that the polyhedron indeed coincides with the HEC.\footnote{ 
In certain restricted settings it has been shown that the static (or RT) HEC and the more general HRT cone coincide as well; see \cite{Bao:2018wwd,Czech:2019lps}. However, a general proof has thus far proven elusive.}
Currently, the HEC (at least for static configurations) is fully known up to $\N=5$ \cite{Bao:2015bfa, Cuenca:2019uzx}, with ongoing attempts to determine the entropy cone for larger $\N$.

In this work, we take a step towards the characterization of the static HEC for arbitrary $\N$. We argue that apart from SA, all non-redundant HEIs, or HEIs that are each not a sum of two other HEIs, must obey a property known as \emph{superbalance} \cite{Hubeny:2018ijt}. Roughly speaking, this notion requires that the UV divergences in the entanglement entropies of 
disjoint subsystems with smooth entangling surfaces cancel out not only in the combination of entropies entering the inequality, but also when we consider purifications of the inequality. The idea behind the purification, further explained in  \cref{s:superbalance}, is to replace one of the subsystems with the purifier, i.e.\ the complement of the union of all the $\N$ parties.  Although purification is a symmetry of the full system, it is not manifest when we represent HEIs in terms of the subsystem entanglement entropies.  For this reason it is more practical to work in a representation of the entropy space basis consisting of perfect tensors (PTs), the so-called K-basis introduced in \cite{He:2019ttu}, which has many additional useful properties.  We will therefore adopt this representation as a technical aid in the proof.  Along the way, we will also utilize the I-basis presented in \cite{Hubeny:2018ijt,He:2019ttu} as a convenient intermediate step.
The main result, that all genuinely multipartite HEIs (i.e.\ excluding SA) are superbalanced, vastly simplifies the explicit search for further HEIs and validates the generality of the ``sieve'' method of \cite{Hubeny:2018ijt}.  

The outline of the paper is as follows. We give a quick overview of the  entropy space and bases thereof in  \cref{s:K-basis}.  In  \cref{s:proof} we review the notions of balance and superbalance and present the main  technical argument. We conclude with a short discussion in  \cref{s:discussion}.

%~~~~~~~~~~~~~~~~~~~~~~~~~~~~~~~~~~~~~~~~~~~~~~~
\section{Three representations of entropy space}
 \label{s:K-basis}
%~~~~~~~~~~~~~~~~~~~~~~~~~~~~~~~~~~~~~~~~~~~~~~

The entropy space is a vector space, and can be represented by a variety of bases  that span the space. We quickly review three particularly useful bases explored in \cite{He:2019ttu}, dubbed the S, I, and K bases. These bases each have their advantages and disadvantages  (we refer the reader to \cite{He:2019ttu} for further details); in what follows, we will make particular use of the K-basis representation.

Consider the entropy space associated to $\N$ parties labeled $\CA_1,\ldots, \CA_{\N}$, and let $\pI \subseteq [\N] \equiv \{1,\ldots,\N\}$ be the index denoting the nonempty subsets of the $\N$ parties. Following the terminology introduced in \cite{Hubeny:2018trv}, we will refer to the subsystems involving a single party ($|\pI| = 1$) as {\it monochromatic}, and to general composite subsystems involving possibly multiple parties ($|\pI| \geq 1$) as  {\it polychromatic}.

When characterizing an entropy vector, it is natural to use the orthonormal basis $\big\{ \SB{\pI} \big\}$ such that
\begin{align}
	\vec \S = \sum_{\pI\subseteq [\N]} \S_\pI \,  \SB{\pI}\ , \quad\text{where $\S_\pI \in \mathbb R$\ .}
\end{align}
This means $\SB{\pI}$ is simply the vector given by $1$ in the component associated to the polychromatic subsystem $\pI$ and $0$ elsewhere, and it is the most common basis used when studying entropy space. This was dubbed the S-basis in \cite{He:2019ttu}. 

A second (and in many ways more convenient) basis used when studying entropy cones is the I-basis. For any $\pI \subseteq [\N]$, multipartite information involving the monochromatic parties in $\pI$ is defined as \cite{Hubeny:2018ijt}
\begin{align}
\label{e:Ingen}
	\I_\pI = \sum_{\pK \subseteq \pI} (-1)^{1+|\pK|} \, \S_\pK\ .
\end{align}
For instance, the expressions for $n$-partite information for $n=1,2,3$ are 
\begin{align}
\begin{split}
	\I_i &= \S_i \\
	\I_{ij} &= \S_i + \S_j - \S_{ij} \\
	\I_{ijk} &= \S_i + \S_j + \S_k - \S_{ij} - \S_{ik} - \S_{jk} + \S_{ijk}\ .
\end{split}
\label{e:In}
\end{align}
Thus, 1-partite information is just the entanglement entropy, 2-partite information the mutual information, 3-partite information the tripartite information, and so on. In the I-basis $\big\{ \IB{\pI}\big\}$, we choose the orthonormal basis vectors $\IB{\pI}$ such that
\begin{align}
	\vec \S = \sum_{\pI \subseteq [\N]} \I_\pI \,\IB{\pI}\ , \quad\text{where $\I_\pI \in \mathbb R$\ .}
\end{align}

Finally, we introduce the K-basis, which was first explored in \cite{He:2019ttu} and will be used extensively in our proofs. The K-basis is comprised of the orthonormal entropy vectors $\big\{ \KB{\G} \big\}$ corresponding to perfect tensors (PTs). A PT$_{2s}$ is a pure state involving $2s$ parties, with $s \in \mathbb Z^+$, such that the reduced density matrix associated to any $s$ parties is maximally mixed. We can capture its entropy structure using graphs such as those depicted in \cref{f:Kbasis}. The Bell pair (BP) $\KB{\{ij\}}$, for instance, is simply a PT$_2$ and is represented by a single link joining two vertices corresponding to the monochromatic systems $i$ and $j$.

%%%%%%%%%%%%%%%%%%%%%%%%%%%%%%%%
\begin{figure}[h]
\centering
    \begin{tikzpicture}[
       thick, scale=0.75,
       acteur/.style={
         circle,
         fill=black,
         thick,
         inner sep=2pt,
         minimum size=0.2cm
       }
     ]
	\node (a1) at (0,2) [acteur,label=$\CA_1$]{}; 
      \node (a2) at (1,1) [acteur,label=right:$\CA_2$]{};
      \node (a3) at (0,0) [acteur,label=below:$\CA_3$]{}; 
      \node (a4) at (-1,1) [acteur,label=left:$\CA_4$]{}; 
      
      \draw (a1) -- (a3);
\end{tikzpicture} \qquad\qquad     \begin{tikzpicture}[
       thick,scale=0.75,
       acteur/.style={
         circle,
         fill=black,
         thick,
         inner sep=2pt,
         minimum size=0.2cm
       }
     ] 
	\node (a1) at (0,2) [acteur,label=$\CA_1$]{}; 
      \node (a2) at (1,1) [acteur,label=right:$\CA_2$]{};
      \node (a3) at (0,0) [acteur,label=below:$\CA_3$]{}; 
      \node (a4) at (-1,1) [acteur,label=left:$\CA_4$]{}; 
      \node (b) at (0,1) [acteur,label={}] {};
      
       \draw (a1) -- (b);
       \draw (a2) -- (b);
       \draw (a3) -- (b);
      	\draw (a4) -- (b);
\end{tikzpicture} 
\caption{These discrete graphs capture the entanglement structure of a PT$_2$ (the BP $\KB{\{13\}}$) on the left, and a PT$_4$ ($\KB{\{1234\}}$) on the right. The boundary vertices denote the individual parties, and each edge represents a single unit of entanglement. The holographic entanglement entropy associated to a set of boundary vertices is the minimum cut associated to those vertices.
}
\label{f:Kbasis}
\end{figure}
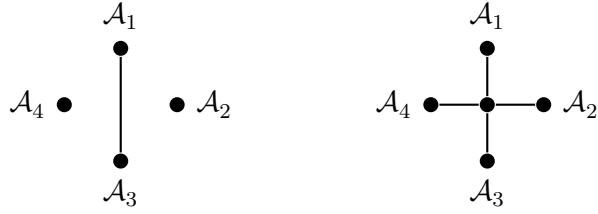
%%%%%%%%%%%%%%%%%%%%%%%%%%%%%

Because these PTs are pure states, in order to use the entanglement entropy associated to PTs as basis vectors, we need to include the purification of our original state, which we denote $\CA_{\N+1} = \overline{\bigcup_{i=1}^\N \CA_i}$. The K-basis is then given by the entropy vectors corresponding to all the even-party PTs involving $[\N+1]$ parties.\footnote{
As a sanity check, we observe that there are exactly $\D = 2^\N-1$ such PTs, which follows from the identity
\begin{align}
	\sum_{s=1}^{\lceil \N/2 \rceil} \binom{\N+1}{2s} = 2^\N - 1\ .
\end{align}} 
As an example, for any $\N \geq 3$, the two graphs in \cref{f:Kbasis} are basis vectors in the K-basis, and can be written in the S-basis as
\begin{align}\label{gvecs}
\begin{split}
	\text{BP (PT$_2$)}\ \ \KB{\{13\}} : \quad & (\S_1,\S_2,\S_3,\S_{12},\S_{13},\S_{23},\S_{123}) = (1,0,1,1,0,1,0) \\
	\text{PT$_4$}\ \ \KB{\{1234\}} : \quad & (\S_1,\S_2,\S_3,\S_{12},\S_{13},\S_{23},\S_{123}) = (1,1,1,2,2,2,1)\ .
\end{split}
\end{align}
We will use the index $\G$ to specify a given K-basis vector; in particular, 
 $\G \subseteq [\N+1]$ with $|\G|$ even. For example,
the expressions appearing in \eqref{gvecs} correspond to  $\G = \{13\}$ and $\G = \{1234\}$, respectively.
For future reference, we will designate the monochromatic parties in $[\N]$ with lower case Latin indices $i,j,\ldots$, and polychromatic parties in $[\N]$ with script Latin indices $\pI, \pK,\ldots$. If we wish to include the purifier, we will underline the index, e.g.  $\ii \in [\N+1]$. 

Writing the entropy vector for an $\N$-party system in terms of the K-basis, we have\footnote{
Since the structural form of the HEIs in the K-basis depends on $\N$, we included the superscript $(\N)$ in the coefficients of $\KB{\G}$ explicitly to avoid confusion, following the notation of \cite{He:2019ttu}.
}
\begin{align}
	\vec \S = \sum_{\G \subseteq [\N+1]} \k_\G^{(\N)} \, \KB{\G}\ , \quad\text{where $\k_\G^{(\N)} \in \mathbb R$} \ .
\end{align}
The K-basis is particularly convenient since as was proved in \cite{He:2019ttu}, the even-party PTs $\KB{\G}$ are extreme rays of the HEC, meaning that they cannot be expressed as a positive combination of any other vectors within the cone.  This has a useful consequence for the form of HEIs rendered in the K-basis.

Consider an {\it information quantity}, written in any of our three bases of interest: 
\begin{equation}
\label{eq:SIKineq}
\Q
=\sum_{\pI \subseteq [\N]}\mu_{\pI}\,\S_{\pI}
=\sum_{\pI \subseteq [\N]}\nu_{\pI}\,\I_{\pI}
=\sum_{\gI \subseteq [\N+1]}\lambda_{\gI}\, \k_{\gI}^{(\N)} \ ,
\end{equation}	
so $\Q$ is equivalently specified by rational coefficients $\mu_{\pI}$ in the S-basis, $\nu_{\pI}$ in the I-basis, or $\lambda_{\gI}$ in the K-basis.
An HEI simply identifies a sign-definite information quantity, expressed as  $\Q\ge 0$.
Since $\KB{\G}$ are extreme rays of the HEC and hence lie within the cone, it was shown in \cite{He:2019ttu} that $\l_\G \geq 0$ in all the HEIs. To exemplify this, we have included in Table~\ref{t:N=5} (taken from \cite{He:2019ttu}) a list of the $\N=5$ non-redundant entropy inequalities (or rather a single representative from each  permutation and purification symmetry orbit) in the three bases. 
\begin{table}
\begin{center}
\scriptsize
\begin{tabular}{| c | c |l  | }
\hline
Relation & \makecell{Basis}  &  Holographic Entropy Inequality  \\ 
\hline
\hline
 SA$_{(1,1)}$
 &  S  &  $\S_{1}+\S_{2}-\S_{12}$ \\ 
 &  \shadeI{I } & \shadeI{$\I_{12}$}  \\
 &  \shadeK{K} & \shadeK{$\k_{12}^{\5}$ } 
\\  \hline  
 MMI$_{(1,1,1)}$
 &  S  & $-\S_{1}-\S_{2}-\S_{3}+\S_{12}+\S_{13}+\S_{23}-\S_{123}$   \\ 
&  \shadeI{I} & \shadeI{$-\I_{123}$}  \\
 &  \shadeK{K} & \shadeK{$\k_{1234}^{\5}+\k_{1235}^{\5}+\k_{1236}^{\5}$ } 
\\  \hline 
 MMI$_{(1,2,2)}$
 &  S  & $-\S_{1}-\S_{23}-\S_{45}+\S_{123}+\S_{145}+\S_{2345}-\S_{12345}$  \\   
&  \shadeI{I } & \shadeI{$-\I_{124}-\I_{125}-\I_{134}-\I_{135}+ \I_{1234} + \I_{1235} + \I_{1245}+ \I_{1345}- \I_{12345}$}  \\
 &  \shadeK{K } & \shadeK{$\k_{1246}^{(5)}+\k_{1256}^{(5)}+\k_{1346}^{(5)}+\k_{1356}^{(5)}+\k_{123456}^{\5}
$ } 
\\ \hline 
$Q^{(5)}_1$
 &  S   &
   $ - \S_{12} - \S_{23} - \S_{34}  - \S_{45} - \S_{15} + \S_{123} + \S_{234} + \S_{345} + \S_{145} + \S_{125}  - \S_{12345}$ \\ 
 &  \shadeI{I  } & \shadeI{$-\I_{124}   - \I_{134} - \I_{135} - \I_{235} - \I_{245} + \I_{1234} + \I_{1235} + \I_{1245} + \I_{1345} + \I_{2345} - \I_{12345}$}  \\
 &  \shadeK{K } & \shadeK{$\k_{1246}^{(5)} + \k_{1346}^{(5)} + \k_{1356}^{(5)} + \k_{2356}^{(5)} + \k_{2456}^{(5)} +2 \k_{123456}^{(5)}  $ } 
\\  \hline 
$Q^{(5)}_2$
 &  S   &  \makecell{$- \S_{12}- \S_{13} - \S_{14} - \S_{23} - \S_{25} - \S_{45}+2\S_{123} + \S_{124} + \S_{125} + \S_{134} + \S_{145} + \S_{235} + \S_{245}   $\\$- \S_{1234} - \S_{1235} - \S_{1245}$}  \\ 
  &  \shadeI{I  } & \shadeI{$- \I_{124} - \I_{125} - \I_{135} - \I_{234} + \I_{1234} + \I_{1235} + \I_{1245} $}  \\
 &  \shadeK{K } & \shadeK{$ \k^{\5}_{1246} + \k^{(5)}_{1256} + \k^{\5}_{1345} + \k^{\5}_{1356} + \k^{\5}_{2345} + \k^{\5}_{2346} + 3 \k^{\5}_{123456} $ } 
\\  \hline 
$Q^{(5)}_3$
 &  S  &   \makecell{$ - \S_{12} - \S_{13} - \S_{14} - \S_{25}- \S_{35} - \S_{45} + \S_{123}  + \S_{124} + \S_{125} + \S_{134} + \S_{135} + \S_{145} + \S_{235} $\\$  + \S_{245} + \S_{345}  - \S_{234} - \S_{1235} - \S_{1245} - \S_{1345}$}  \\ 
 &  \shadeI{I} & \shadeI{$- \I_{125} - \I_{135} - \I_{145} - \I_{234} + \I_{1235} + \I_{1245} + \I_{1345} $}  \\
 &  \shadeK{K } & \shadeK{$ \k^{\5}_{1234} + \k^{\5}_{1256} + \k^{\5}_{1356} + \k^{(5)}_{1456} + \k^{\5}_{2345} + \k^{\5}_{2346} +3 \k^{\5}_{123456} $ } 
\\  \hline 
$Q^{(5)}_4 $
 &  S    &   \makecell{$- \S_2 - \S_3 - \S_4 - \S_5 - \S_{12} - \S_{13} + \S_{23} + \S_{45} + \S_{123} + \S_{124} + \S_{125} + \S_{134} + \S_{135} - \S_{145}$\\$ - \S_{1234} - \S_{1235}$}  \\ 
 &  \shadeI{I} & \shadeI{$- \I_{123} - \I_{145} - \I_{234} - \I_{235} + \I_{1234} + \I_{1235} $}  \\
 &  \shadeK{K } & \shadeK{$ \k^{\5}_{1236} + \k^{\5}_{1245} + \k^{\5}_{1345} + \k^{\5}_{1456}  + 2 \k^{\5}_{2345}  + \k^{\5}_{2346} + \k^{\5}_{2356} + 2 \k^{\5}_{123456} $ } 
\\  \hline 
$Q^{(5)}_5 $
 &  S   &  \makecell{$ - 2\S_{12} - 2\S_{13} - \S_{14} - \S_{15} - \S_{23}  - 2\S_{24} - 2\S_{35}  - \S_{45}+ 3\S_{123} + 3\S_{124} + \S_{125} + \S_{134} $\\$  + 3\S_{135}+ \S_{145} + \S_{234} + \S_{235} + \S_{245} + \S_{345} - 2\S_{1234} - 2\S_{1235} - \S_{1245} - \S_{1345}$}   \\ 
  &  \shadeI{I  } & \shadeI{$- \I_{123} -2 \I_{125} - 2 \I_{134} - \I_{145} - \I_{234} - \I_{235} + 2\I_{1234} +2 \I_{1235} + \I_{1245} + \I_{1345} $}  \\
 &  \shadeK{K } & \shadeK{$\k^{\5}_{1236} + \k^{(5)}_{1245} + 2 \k^{\5}_{1256} + \k^{\5}_{1345} + 2 \k^{\5}_{1346} + \k^{\5}_{1456}  + 2 \k^{\5}_{2345}    + \k^{\5}_{2346} + \k^{\5}_{2356} + 6 \k^{\5}_{123456}$ } 
\\  \hline 
\end{tabular}
\end{center}
\caption{The non-redundant HEIs for $\N=5$ in the S, I, and K bases. The first three rows are the standard SA and MMI (with the subscript corresponding to the cardinality of the arguments as in \cite{He:2019ttu}), and the last five rows are inequalities discovered in \cite{Bao:2015bfa}, and labeled simply as $Q^{(5)}_i $ since their deeper meaning is as yet obscure. Note that the coefficients of $\k^{(5)}_\G$ in all the HEIs are nonnegative, exemplifying a more general result of \cite{He:2019ttu}.}
\label{t:N=5}
\end{table}

A closer examination of Table~\ref{t:N=5} reveals that except for SA, all the other non-redundant HEIs do not involve $\k_{\ii\jj}^{(5)}$ for all $\ii,\jj \in [\N+1]$. Therefore, one may suspect that for any $\N$, all non-redundant $\N$-party HEIs other than SA will not involve $\k_{\ii\jj}^{(\N)}$. 
We will prove that this is indeed true, and as we shall see, this is tantamount to the statement that non-redundant HEIs other than SA obey a property known as superbalance.

%~~~~~~~~~~~~~~~~~~~~~~~~~~~~~~~~~~~~~~~~~~~~~~~
\section{Multipartite holographic entropy inequalities are superbalanced} 
\label{s:proof}
%~~~~~~~~~~~~~~~~~~~~~~~~~~~~~~~~~~~~~~~~~~~~~~

Having  described the various bases characterizing the entropy space, we turn to the main point of the discussion -- demonstrating that all non-redundant HEIs except SA are superbalanced. We first formally introduce the notion of superbalance from \cref{s:superbalance}, which was originally expressed in terms of the I-basis. In \cref{ss:Iofg}, we reformulate the notion of superbalance in terms of the K-basis instead (equivalently, finding the basis transform between the I-basis and K-basis), since it will be a helpful intermediate step in proving the main result. Armed with these results, we state and prove our main theorem in  \cref{ss:proof}. A few additional details regarding the map between I and K bases can be found in  \cref{a:IKconf}.

%~~~~~~~~~~~~~~~~~~~~~~~~~~~~~~~~~~~~~~~~~~~~~~
\subsection{Balance and superbalance}
\label{s:superbalance}
%~~~~~~~~~~~~~~~~~~~~~~~~~~~~~~~~~~~~~~~~~~~~~~

We begin by reviewing the notion of an HEI being balanced and superbalanced \cite{Hubeny:2018ijt}. Consider an entropy inequality in S-basis of the form $\Q\ge0$, with $\Q$ given by the first expression in \eqref{eq:SIKineq}.  For every $i \in [\N]$, define the function
\begin{align}
	\t_i(\pI) = \begin{cases}
	1 & i \in \pI \\
	0 & i \notin \pI\ .
\end{cases}
\end{align}
Then an HEI is {\it balanced} if the occurrence of each party individually cancels, i.e.\ if it satisfies 
\begin{align}\label{balance}
	\sum_{\pI \subseteq [\N]} \mu_\pI \, \theta_i(\pI) = 0 \qquad \forall \ i \in [\N] \ .
\end{align}
For example, as the reader can easily verify, all the HEIs listed in Table~\ref{t:N=5} are balanced.

However, even if a given HEI is balanced, its purification need not necessarily be balanced. For instance, if we use purification on SA$_{(1,1)}$ from the first row of Table~\ref{t:N=5} to eliminate party $1$, i.e. we replace $\S_1$ with $\S_{23456}$ and $\S_{12}$ with $\S_{3456}$ since in a pure state $\S_\pII = \S_{[\N+1]\setminus\pII}$, the resulting inequality is the Araki-Lieb (AL) inequality
\begin{align}
	\S_{23456} + \S_{2} - \S_{3456} \geq 0\ ,
\end{align}
which is manifestly not balanced.  Since the HEC is symmetric under all permutations of the subsystem labels, including the purifier, it is more useful to consider the class of HEIs that are not only themselves balanced, but whose purifications are likewise also balanced. Such inequalities are defined to be {\it superbalanced} \cite{Hubeny:2018ijt}.\footnote{
In other words, HEIs that remain balanced under any purification are defined to be superbalanced.
} 
Superbalanced HEIs have  structurally useful properties, which are further explained in \cite{Hubeny:2018ijt}.  In particular, the methods used to generate new HEIs are best tailored to this case.  

In \cite{Hubeny:2018ijt}  the notion of balance and superbalance was generalized to the notion of $\R$-balance, where $\R$ is a positive integer.\footnote{
The property of being balanced would correspond to $\R= 1$ and that of being superbalanced to $\R = 2$.
}
In particular, an information quantity $\Q$ is {\it $\R$-balanced} if its expansion in the I-basis does not contain terms of rank $\R$ or lower.  In the I-basis (middle expression of \eqref{eq:SIKineq}), superbalanced HEIs would then contain $\I_\pI$ with $|\pI|\ge 3$, as is manifested in  Table~\ref{t:N=5}.

Hitherto, it has been an open question as to whether superbalance is a true property of all non-redundant HEIs other than SA (which implicitly includes AL in its symmetry orbit).  Table~\ref{t:N=5}
manifests the result for $\N=5$, and it is easy to verify that it continues to hold for all the HEIs discovered so far. In  \cref{ss:proof}, we shall prove the result in full generality. Before doing so, it is first useful to phrase the notion of superbalance directly in the K-basis, which we now turn to in \cref{ss:Iofg}.

%~~~~~~~~~~~~~~~~~~~~~~~~~~~~~~~~~~~~~~~~~~~~~~
\subsection{Multipartite information of PTs}
\label{ss:Iofg}
%~~~~~~~~~~~~~~~~~~~~~~~~~~~~~~~~~~~~~~~~~~~~~~

Since the K-basis is particularly convenient for discussing HEIs as evidenced above, it will be useful to re-express the notion of $\R$-balance in terms of the K-basis. Of course, one could determine the explicit conversion between the I and K bases by evaluating the multipartite information of the PTs, as is explicitly carried out in \cref{a:IKconf}.  However, here it suffices to obtain a simple criterion for a given matrix element to vanish. Because the K-basis invokes the purifier, we will for aesthetic convenience label it by $\ii=0$ (as opposed to $\N+1$) in the rest of this subsection, and similarly let $[\N+1] \equiv \{0,1,\ldots,\N\}$ (as opposed to $\{1,\ldots,\N+1\}$).

\begin{lemma}\label{l:Iofg}
The multipartite information $\I_{\pI}$ of a PT $\KB{\G}$ vanishes whenever $\pI$ evokes any subsystem which is not contained in $\G$, i.e.\
\begin{equation} 
\label{eq:Iofgzero}
\left.\I_\pI \right|_{\KB{\G}} = 0 
\qquad {\rm if} \ \
 \pI \nsubseteq \G \ .
\end{equation}	
\end{lemma}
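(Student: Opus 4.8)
The plan is to reduce the statement to a single structural feature of the perfect-tensor state underlying $\KB{\G}$: every party outside the support $\G$ acts as a \emph{spectator}. Recall that $\KB{\G}$ is the entropy vector of a PT$_{2s}$ living on the parties in $\G$ (with $|\G|=2s$), tensored with a trivial decoupled factor on the remaining parties. Consequently the entropy of any subsystem $\pK$ evaluated on this state depends only on the overlap $\pK\cap\G$; explicitly,
\begin{equation}
\left.\S_\pK\right|_{\KB{\G}} = \min\!\big(|\pK\cap\G|,\ |\G|-|\pK\cap\G|\big)\ ,
\end{equation}
which follows from maximal mixedness of any $\le s$ of the $2s$ PT legs together with purity of the whole PT. The only consequence I actually need is the immediate corollary that $\left.\S_\pK\right|_{\KB{\G}}$ is \emph{invariant} under adjoining or deleting a party $j\notin\G$ from $\pK$, since such a $j$ never alters $\pK\cap\G$.

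With this in hand, I would first invoke the hypothesis $\pI\nsubseteq\G$ to pick a spectator index $j\in\pI\setminus\G$ (which, because $\pI\subseteq[\N]$, is automatically an ordinary party and never the purifier). I then split the defining sum \eqref{e:Ingen} according to whether $\pK$ contains $j$,
\begin{equation}
\left.\I_\pI\right|_{\KB{\G}}
= \sum_{\substack{\pK\subseteq\pI \\ j\notin\pK}} (-1)^{1+|\pK|}\,\S_\pK
+ \sum_{\substack{\pK\subseteq\pI \\ j\in\pK}} (-1)^{1+|\pK|}\,\S_\pK\ ,
\end{equation}
and in the second sum reindex by $\pK=\pK'\cup\{j\}$ with $\pK'\subseteq\pI\setminus\{j\}$. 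The shift $|\pK|=|\pK'|+1$ flips the sign, while the spectator property gives $\S_{\pK'\cup\{j\}}=\S_{\pK'}$, so the reindexed second sum is precisely the negative of the first (both now ranging over the same index set $\pK'\subseteq\pI\setminus\{j\}$). The two therefore cancel term by term, yielding $\left.\I_\pI\right|_{\KB{\G}}=0$.

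I expect no serious obstacle here; the argument is a clean pairing/telescoping cancellation that does not even require the explicit value of the entropies, only their invariance under spectators. The one point deserving care is justifying that invariance rigorously — i.e.\ that the state realizing $\KB{\G}$ genuinely factorizes into a PT on $\G$ and a pure decoupled remainder, so that parties in $[\N]\setminus\G$ contribute nothing to any $\S_\pK$. This must be checked to hold uniformly whether or not the purifier belongs to $\G$, but it is handled automatically: since the arguments of $\I_\pI$ and of every $\S_\pK$ lie in $[\N]$ and never reference the purifier directly, the reduction to the overlap $\pK\cap\G$ is unaffected by the purifier's membership in $\G$. Everything else is bookkeeping.
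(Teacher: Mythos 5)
Your proposal is correct and follows essentially the same route as the paper: both pick a party $j\in\pI\setminus\G$, use the spectator property $\left.\S_{\pK\cup\{j\}}\right|_{\KB{\G}}=\left.\S_{\pK}\right|_{\KB{\G}}$ (with $\left.\S_j\right|_{\KB{\G}}=0$), and cancel the alternating sum in \eqref{e:Ingen} pairwise by splitting on whether $\pK$ contains $j$. The only cosmetic difference is that you justify the spectator property via the explicit min-cut formula $\min\big(|\pK\cap\G|,\,|\G|-|\pK\cap\G|\big)$, which the paper simply asserts.
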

\noindent
Note that since $\pI$ never contains the purifier $0$, we can equivalently write the condition in \eqref{eq:Iofgzero} as $\pI \nsubseteq \gI\setminus 0$, where $\gI\setminus 0$ indicates we exclude the purifier from $\gI$ when present.
\begin{proof}
Suppose $\pI \nsubseteq \gI$.  
This means there must be at least one party, call it $i$, that is contained in $\pI$ but not in $\G$.  Let us (with a slight abuse of notation) denote  the rest of $\pI$ by $\pJ \equiv \pI \setminus i$ and write $\pI = i \pJ$, where the union is implied as usual.
Since $ i \notin \G$, the entropy vector for $\KB{\G}$ has the property that $\left.\S_{i\pK} \right|_{\KB{\G}} \, = \left.\S_{\pK} \right|_{\KB{\G}} $ for any $\pK \subseteq [\N+1]$ and $\left.\S_i \right|_{\KB{\G}}  =0$.
Splitting \eqref{e:Ingen} into terms not containing $i$ and terms containing $i$, we have
\begin{align}
\begin{split}
\left.\I_\pI \right|_{\KB{\G}}&= 
	\left.\left( \sum_{\pK \subseteq \pJ}
	 (-1)^{1+|\pK|} \, \S_\pK
	+ \sum_{\pK \subseteq \pJ} (-1)^{2+|\pK|} \, \S_{i\pK}
	+ \S_i \right)\right|_{\KB{\G}}
	\\
	& =  \sum_{\pK \subseteq \pJ} (-1)^{1+|\pK|} \, \underbrace{\left(\left.\S_\pK\right|_{\KB{\G}}-\left.\S_{i\pK}\right|_{\KB{\G}}\right)}_{=0} + \underbrace{\left.\S_i\right|_{\KB{\G}}}_{=0} 
	= 0  \ ,
\end{split}
\end{align}
thereby proving the statement.
\end{proof}

This observation has the immediate consequence that when we express the multipartite information $\I_\pI $ as a linear combination of $\k_\G$ so that
\begin{equation} \label{eq:IfromK}
\I_\pI = \sum_{\G \subseteq [\N+1]} \alpha_{\pI}{}^{\gI } \, \k_\G \ , 
\end{equation}	
the coefficients $\a_\pI{}^\gI$ vanish whenever the cardinality of $\pI$ is greater than that of $\gI \setminus 0$, i.e.
\begin{equation} \label{eq:Iofgmincard}
\alpha_{\pI}^{\ \gI} =0 \quad \text{whenever}\quad |\G\setminus 0| < |\pI| \  .
\end{equation}	
This follows from evaluating both sides of \eqref{eq:IfromK} on a PT $\KB{\G'}$ and using $\left.\k_\G\right|_{\KB{\G'}} = \delta_{\G\G'}$ to express the coefficients directly as 
\begin{equation} 
\label{eq:alphadef}
\alpha_{\pI}{}^{\gI} = \left.\I_\pI \right|_{\KB{\G}} \ .
\end{equation}	
If $|\G\setminus 0| < |\pI|$, $\G\setminus 0$ cannot contain the entire subsystem $\pI$, so by Lemma~\ref{l:Iofg}, $\alpha_{\pI}{}^{\gI}$ must vanish.

We will now use \eqref{eq:Iofgmincard} to convert the criterion of $\R$-balance from its I-basis definition to its K-basis definition. Recall that an information quantity $\Q$ is $\R$-balanced iff its representation in the I-basis only involves terms $\I_\pI$ with $|\pI|\ge\R+1$. Thus, when we convert to it the K-basis using \eqref{eq:IfromK}, we have
\begin{align}
\begin{split}
	\Q =\sum_{\pI \subseteq [\N] , |\pI|>\R}\nu_{\pI}\,\I_{\pI} 
	&= \sum_{\G \subseteq [\N+1]} \left( \sum_{\pI \subseteq [\N], |\pI|>\R} \nu_{\pI}\, \alpha_{\pI}^{\ \gI}\right)  \, \k_\G^{(\N)}  \\
	&= \sum_{\G \subseteq [\N+1], |\G\setminus 0|\ge|\pI|} \left( \sum_{\pI \subseteq [\N], |\pI|>\R} \nu_{\pI}\, \alpha_{\pI}^{\ \gI}\right)  \, \k_\G^{(\N)} \ ,
\end{split}
\end{align}	
where in the last step we used \eqref{eq:Iofgmincard}. This means if $\Q$ is $\R$-balanced, it will only involve $\k_\gI^{(\N)}$ with $|\gI\setminus 0| \geq \R+1$.

Conversely, suppose $\Q$ takes the form 
\begin{equation}
\label{eq:KineqR}
\Q =\sum_{\gI \subseteq [\N+1], |\G\setminus 0| > \R}\lambda_{\gI}\, \k_{\gI}^{(\N)} 
\end{equation}	
for some arbitrary coefficients $\lambda_{\gI}$.  Comparing this with the last expression in \eqref{eq:SIKineq}, this is tantamount to requiring 
\begin{align}
	\lambda_{\gI} =0 \quad\text{whenever}\quad\begin{cases}
	|\G|\le\R & \text{if $0 \nsubseteq \G$} \\
	|\G|\le\R+1 & \text{if $0 \subset\G$} \ .
\end{cases}
\end{align}
To show that $\Q$ is in fact guaranteed to be $\R$-balanced, we need to re-express each $\k_{\G}^{(\N)}$ in \eqref{eq:KineqR} in the I-basis, formally written as
\begin{equation} \label{eq:KfromI}
 	\k_\G^{(\N)}  = \sum_{\pI \subseteq [\N]} \beta_{\G}{}^{\pI} \, \I_\pI \ ,
\end{equation}	
and argue that when we substitute this into \eqref{eq:KineqR}, the final expression does not involve $\I_\pI$ for $|\pI| \leq \R$. Note that the matrix ${\beta}$ involved in \eqref{eq:KfromI} is simply the inverse of the matrix  ${\alpha}$ from \eqref{eq:IfromK}.  

To relate the two matrices, it is more convenient to use an `inclusion-ordering' rather than the lexicographic ordering of \eqref{eq:Svec}.  For the I-basis, we order the $\pI$ indices according to 
\begin{equation}\label{eq:Iorder}
\{ 1,\, 2,\, 12,\, 3,\, 13,\, 23,\, 123,\, 4,\, 14,\, 24,\, 124,\, 34,\, 134,\, 234,\, 1234, \ldots \} \ .
\end{equation}	
In particular, we can construct the list iteratively: to arrange the indices of an $(\N+1)$-party entropy vector (excluding the purifier), we start by writing the indices $\{\pI\}$ of the $\N$-party entropy vector and append $i=\N+1$ followed by $\{ \pI i \}$.  The inclusion-ordering for $\G$ follows the same logic, but since $|\G|$ is restricted to be even, the expression looks slightly more complicated.  Nevertheless, we can use \eqref{eq:Iorder} with the simple rule that we leave $\pI$ as is if $|\pI|$ is even, but  prepend the purifier $0$ to $\pI$ if $|\pI|$ is odd.  Specifically, the $\G$ ordering would then be
\begin{equation}\label{eq:Gorder}
\{ 01,\, 02,\, 12,\, 03,\, 13,\, 23,\, 0123,\, 04,\, 14,\, 24,\, 0124,\, 34,\, 0134,\, 0234,\, 1234, \ldots \} \ .
\end{equation}	
Writing the conversion matrices with the rows and columns ordered according to \eqref{eq:Iorder} and \eqref{eq:Gorder}, it is easy to verify that the ${\alpha}$ matrix is upper-triangular due to \eqref{eq:Iofgmincard}, so   the corresponding ${\beta} = {\alpha}^{-1} $ matrix must be likewise upper-triangular.  Indeed, we have the slightly stronger condition
\begin{equation} \label{eq:gofImincard}
\beta_{\G}{}^{\pI} =0 \quad  \text{whenever}\quad  \gI\setminus 0 \nsubseteq \pI \ ,  
\end{equation}	
which requires some entries above the diagonal to vanish as well, analogous to the case for $\alpha$, i.e.\ condition \eqref{eq:Iofgzero}. Condition \eqref{eq:gofImincard} follows from the additional requirement that the system is permutation symmetric, so under permutations (i.e.\ simple relabeling of the rows and columns) the $\alpha$ and $\beta$ matrices have to remain upper triangular.

Using \eqref{eq:gofImincard}, we see that if $|\pI| < |\gI \setminus 0|$, then $\gI \setminus 0 \nsubseteq \pI$, which means $\b_\G{}^\pI = 0$. Substituting \eqref{eq:KfromI} into \eqref{eq:KineqR}, we then obtain
\begin{align}
\begin{split}
	\Q = \sum_{\gI \subseteq [\N+1], |\G\setminus 0| > \R }\lambda_{\gI}\, \k_{\gI}^{(\N)} 
	&= \sum_{\pI \subseteq [\N], |\pI| \geq |\gI \setminus 0|} \left( \sum_{\gI \subseteq [\N+1], |\G\setminus 0|> \R }\lambda_{\gI} \, \beta_{\G}{}^{\pI}\right) \I_\pI\ .
\end{split}
\end{align}
This shows that $\Q$ only involves $\I_\pI$ with $|\pI| \geq \R+1$ and is therefore $\R$-balanced. Hence, the definition of $\R$-balance in the K-basis is given as follows:
\begin{defn}\label{def:RbalK}
An $\N$-party information quantity $\Q$ is $\R$-balanced iff in its K-basis representation (last expression in \eqref{eq:SIKineq}) it does not contain terms $ \k_{\gI}^{(\N)} $ with $|\G\setminus 0|\le \R$.
\end{defn}	
\noindent
Note that the purifier $0$ appears explicitly in this definition because although the K-basis is purification-symmetric itself, the definition of $\R$-balance, formulated in the I-basis, is not purification symmetric and therefore underlies the motivation for desiring superbalance rather than just balance.

Specializing now to superbalance, $\R=2$, we obtain the following corollary, which we will use in the following subsection to prove that all multipartite (i.e.\ other than SA) HEIs are superbalanced.
\begin{coro}\label{def:superbalK}
An $\N$-party information quantity $\Q$ (or the corresponding HEI, $\Q\ge0$) is superbalanced iff its K-basis representation does not contain $\k_{\ii\jj}^{(\N)}$ for any $\ii,\jj\in[\N+1]$.
\end{coro}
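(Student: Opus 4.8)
The plan is to recognize the corollary as nothing more than the $\R = 2$ specialization of Definition~\ref{def:RbalK}, followed by a short enumeration identifying which K-basis vectors are being excluded. Since superbalance corresponds to $\R = 2$, Definition~\ref{def:RbalK} immediately tells us that $\Q$ is superbalanced iff its K-basis representation contains no $\k_\gI^{(\N)}$ with $|\G\setminus 0|\le 2$. The entire content of the corollary is then the claim that the even-cardinality subsets $\G\subseteq[\N+1]$ obeying $|\G\setminus 0|\le 2$ are precisely the two-element subsets, i.e.\ the Bell pairs $\KB{\ii\jj}$ with $\ii,\jj\in[\N+1]$.

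To establish this identification I would split into cases on the value of $|\G\setminus 0|$, always remembering that $|\G|$ must be even and positive. If $|\G\setminus 0| = 0$ then $\G\subseteq\{0\}$ forces $|\G|\le 1$, so no admissible $\G$ arises. If $|\G\setminus 0| = 1$, writing $\G\setminus 0 = \{i\}$ with $i\in[\N]$, evenness of $|\G|$ requires the purifier to be present, yielding $\G = \{0i\}$. If $|\G\setminus 0| = 2$, writing $\G\setminus 0 = \{ij\}$, evenness now forbids the purifier, yielding $\G = \{ij\}$. The admissible $\G$ are therefore exactly $\{0i\}$ and $\{ij\}$ for $i,j\in[\N]$, and these together exhaust all two-element subsets of $[\N+1]=\{0,1,\ldots,\N\}$ (there are $\N + \binom{\N}{2} = \binom{\N+1}{2}$ of them), i.e.\ all $\k_{\ii\jj}^{(\N)}$. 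Hence excluding $\k_\gI^{(\N)}$ with $|\G\setminus 0|\le 2$ is the same as excluding every $\k_{\ii\jj}^{(\N)}$, which is exactly the asserted equivalence.

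The argument is a direct corollary and poses no genuine obstacle; the only point demanding attention is the parity bookkeeping. One must keep careful track of how the even-cardinality constraint on $\G$ interacts with the removal of the purifier, so that the $|\G\setminus 0| = 1$ case (which must include the purifier to restore even parity) and the $|\G\setminus 0| = 2$ case (which must exclude it) dovetail precisely into the complete collection of Bell pairs, including those attached to the purifier itself. With this enumeration in hand, the stated equivalence follows immediately.
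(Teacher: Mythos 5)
Your proposal is correct and takes essentially the same route as the paper: the corollary is obtained there simply by setting $\R=2$ in Definition~\ref{def:RbalK}, with the identification of the excluded terms as the Bell pairs $\k_{\ii\jj}^{(\N)}$ left implicit. Your explicit parity bookkeeping — showing that the even-cardinality $\G\subseteq[\N+1]$ with $|\G\setminus 0|\le 2$ are exactly the two-element subsets $\{0i\}$ and $\{ij\}$ — just spells out the step the paper glosses over, and it is done correctly.
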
	
%

%~~~~~~~~~~~~~~~~~~~~~~~~~~~~~~~~~~~~~~~~~~~~~~
\subsection{Proving superbalance}
\label{ss:proof}
%~~~~~~~~~~~~~~~~~~~~~~~~~~~~~~~~~~~~~~~~~~~~~~

Before proving the main result of this paper, let us warm up making a simple observation that will be useful later.  The K-basis representation of SA is\footnote{
We now revert back to the notation in \cite{He:2019ttu}, where the purifier is labeled by $\N+1$ instead of $0$, so $\ii$ and $\jj$ in \eqref{eq:SAK} run from $1$ to $\N+1$.
}  
\begin{equation}\label{eq:SAK}
\k_{\ii\jj}^{(\N)} \geq 0 \ .
\end{equation}	
This follows from the I-basis representation of SA \eqref{sa}, namely $\I_{ij}\ge 0$, and the observation that in the expansion \eqref{eq:IfromK} for $\pI=ij$,
\begin{align}\label{I2}
	\alpha_{ij}{}^{\G}
	=\I_{ij}\big|_{\KB{\G}} = \begin{cases}
	2, & \G = \{ij\} \\
	0, & \G \not= \{ij\}\ ,
\end{cases}
\end{align}
which implies that $\I_{ij} = 2\k_{ij}^{(\N)}$.  The full orbit under purifications corresponds to enlarging $i \to \ii$ and $j \to \jj$, which gives \eqref{eq:SAK}.

The claim that non-redundant HEIs except SA are superbalanced is a direct consequence of the following theorem, which is interesting in its own right, as it constrains the possible form extreme rays may have in the $\N$-party entropy cone. Begin by considering an $\N$-party system, and denote the HEC by $\CC^\N$. Let $\KB{\G}$ label the K-basis vectors (the PTs), and let $\hat v^{(\N)}$ be any other (i.e.\ non-PT) extreme ray of $\CC^\N$. Note that PT extreme rays that are not BPs have vanishing mutual information between the monochromatic subsystems, and hence cannot involve BPs.

\begin{theorem} \label{thm:main}
The non-Bell pair (BP) extreme rays do not involve BPs.  In particular, any non-BP extreme ray $\hat v^{(\N)}$ of $\CC^\N$ written in the K-basis
\begin{align}
\label{eq:vK}
	\hat v^{(\N)} = \sum_{\G \subseteq [\N+1]} \av_{\G}^{(\N)} \, \KB{\G} \quad\text{where}\quad \av_\G^{(\N)} \in \mathbb R\ ,
\end{align}
must satisfy 
\begin{align}\label{claim}
	\av_{\G}^{(\N)}=0 \quad \forall\  |\G| = 2 \ .
\end{align}
\end{theorem}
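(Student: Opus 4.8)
The plan is to argue by contradiction, using a ``Bell-pair subtraction'' together with the defining property of an extreme ray. Fix a BP index $\G_0$ with $|\G_0|=2$; since $\CC^\N$ and its set of extreme rays are invariant under all permutations of the $\N+1$ labels (including the purifier), it suffices to rule out $\av^{(\N)}_{\G_0}\neq 0$ for a single representative, which I will write as $\G_0=\{ab\}$. By the warm-up identity $\I_{ab}=2\,\k^{(\N)}_{ab}$, the coefficient $\av^{(\N)}_{\G_0}$ equals $\tfrac12\,\I_{ab}(\hat v^{(\N)})$, so SA already forces $\av^{(\N)}_{\G_0}\ge 0$; the task is only to exclude the strict inequality.

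The skeleton of the argument is as follows. Suppose $\av^{(\N)}_{\G_0}>0$ and write
\begin{align}
\label{eq:decomp}
	\hat v^{(\N)} = \Big(\hat v^{(\N)} - \av^{(\N)}_{\G_0}\,\KB{\G_0}\Big) + \av^{(\N)}_{\G_0}\,\KB{\G_0}\ .
\end{align}
The second summand is a positive multiple of the Bell pair $\KB{\G_0}\in\CC^\N$. If I can show the first summand $\hat v' \equiv \hat v^{(\N)} - \av^{(\N)}_{\G_0}\,\KB{\G_0}$ also lies in $\CC^\N$, then \eqref{eq:decomp} exhibits $\hat v^{(\N)}$ as a sum of two cone vectors. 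Extremality then forces each summand to be a nonnegative multiple of $\hat v^{(\N)}$; in particular $\KB{\G_0}\propto \hat v^{(\N)}$, so $\hat v^{(\N)}$ would itself be a Bell pair, contradicting the hypothesis. As $\G_0$ ranges over all $|\G|=2$ indices, this yields \eqref{claim}.

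Everything therefore reduces to the single claim that subtracting the Bell pair keeps us inside the cone, $\hat v' \in \CC^\N$, and this is the main obstacle. A purely algebraic attempt stalls: for any HEI $\Q=\sum_\G\lambda^\Q_\G\,\k^{(\N)}_\G$ one has $\Q(\hat v')=\Q(\hat v^{(\N)})-\av^{(\N)}_{\G_0}\,\lambda^\Q_{\G_0}$, which is automatically nonnegative for the superbalanced inequalities (those with $\lambda^\Q_{\G_0}=0$), but for HEIs with $\lambda^\Q_{\G_0}>0$ its nonnegativity is equivalent to knowing that such inequalities are exhausted by SA$_{\G_0}$ — i.e.\ to the dual form of the very statement we are proving. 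Hence I expect a self-contained argument to require the graph (min-cut) realization of $\CC^\N$ \cite{Bao:2015bfa}: realize $\hat v^{(\N)}$ by a weighted network on the $\N+1$ boundary vertices, and use the positivity of $\I_{ab}$ to extract, via a max-flow/min-cut analysis between $a$ and $b$, a weight-$\av^{(\N)}_{\G_0}$ Bell-pair subnetwork whose removal leaves a valid network realizing $\hat v'$.

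The technically delicate point — and the crux of the whole proof — is that this surgery must shift the realized entropies by exactly $\av^{(\N)}_{\G_0}\,\KB{\G_0}$: the min cut of every region $\pI$ separating $a$ from $b$ must drop by precisely $\av^{(\N)}_{\G_0}$, while the min cut of every region containing both or neither of $a,b$ must be left untouched. Naive edge deletion is insufficient — a PT$_4$ admits unit $a\!\to\! b$ flow through its central vertex yet has $\I_{ab}=0$, so flow value alone does not measure extractable Bell pairs — so the argument must invoke the nesting and complementarity structure of holographic min cuts, and plausibly the rigidity forced by $\hat v^{(\N)}$ being extreme, to guarantee that the extracted flow is genuinely bipartite and that its removal preserves all non-separating entropies. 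Establishing this surgery is where the real work lies; once it is in place, the extremality argument above closes immediately.
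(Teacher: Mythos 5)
Your skeleton---subtract the Bell pair, then invoke extremality of $\hat v^{(\N)}$ to reach a contradiction---is exactly the frame of the paper's proof, and your diagnosis is also correct: everything hinges on the single claim $\hat v' \equiv \hat v^{(\N)} - \av^{(\N)}_{\G_0}\KB{\G_0} \in \CC^\N$, and the purely algebraic/dual route to it is circular (it presupposes the superbalance statement the theorem is meant to deliver). But the proposal stops precisely there: you explicitly defer the graph-surgery step (``establishing this surgery is where the real work lies''), so what you have is a reduction of the theorem to an unproven lemma, not a proof. And that lemma is genuinely the hard part---your own PT$_4$ example shows that a naive max-flow extraction of a Bell pair between $a$ and $b$ does not work, and nothing in the proposal supplies a mechanism that makes the extraction legitimate.

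The ingredient you are missing, and which the paper uses to avoid any direct min-cut surgery, is \emph{induction on $\N$ combined with coarse-graining}. The base case $\N=3$ is checked by inspection (the extreme rays of $\CC^3$ are six BPs and one PT$_4$). For the inductive step, realize the putative extreme ray $\hat v^{(\N+1)}$ with $\av^{(\N+1)}_{12}>0$ by a weighted graph, and merge the boundary vertices $\CA_\N$ and $\CA_{\N+1}$ into a single party. The resulting vector $L\hat v^{(\N+1)}$ lies in $\CC^\N$ and has the same $\I_{12}$, hence the same BP coefficient. The induction hypothesis---no non-BP extreme ray of $\CC^\N$ involves $\KB{\{12\}}$---then forces the entire $\KB{\{12\}}$ content of $L\hat v^{(\N+1)}$ to sit in an actual Bell-pair summand, which can be rendered (via the entropy-preserving graph moves of Bao et al.) as a disjoint edge of weight $\av^{(\N+1)}_{12}$ joining $\CA_1$ and $\CA_2$ inside the internal part $G$ of the graph. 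Since the coarse-graining only touches the external legs of $\CA_\N$ and $\CA_{\N+1}$ and leaves $G$ intact, that isolated edge is already present in the original $(\N+1)$-party graph; reducing its weight is then a manifestly valid graph operation, yielding a cone vector $\vec w$ with $\hat v^{(\N+1)} = \vec w + \tfrac{1}{2}\av^{(\N+1)}_{12}\,\KB{\{12\}}$, contradicting extremality. This induction-plus-coarse-graining step is exactly the ``surgery certificate'' your proposal lacks; without it, or a substitute of comparable strength, the argument does not close.
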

\begin{proof} 

First, observe that since all the PTs are basis vectors in entropy space, if $\hat v^{(\N)} = \hat g^{\G'}$ for some PT $\G'$, then the linear independence of the basis implies
\begin{align}
	\av_\G^{(\N)} = \begin{cases}
	1 & \G = \G' \\
	0 & \G \not= \G'\ .
\end{cases}
\end{align}	
This in particular means that none of the higher-party PTs involve BPs, so we will henceforth assume $\hat v^{(\N)}$ is not a PT. 

We now proceed via induction. The claim is trivially true for $\N=3$, as the extreme rays of $\CC^{3}$ consist of six BPs and a single PT$_4$, so the only non-BP $\hat v^{(3)}$ does not involve a BP.
Our induction hypothesis is thus that the claim is true for $\CC^{\N}$ with some fixed $\N \geq 3$. To show that the claim continues to hold for $\CC^{\N+1}$, we employ a proof by contradiction.

Suppose there exists some non-PT extreme ray $\hat v^{(\N+1)}$ of $\CC^{\N+1}$ such that
\begin{align}
	\hat v^{(\N+1)} = \sum_{\G \subseteq [\N+1]} \av_{\G}^{(\N+1)} \, \KB{\G}\ ,
\end{align}
and $\av_{\ii\jj}^{(\N+1)} \not= 0$ for some $\ii,\jj \in [\N+1]$. We may without loss of generality relabel the parties to get $\av_{12}^{(\N+1)} \not= 0$, so that 
\begin{align}
	\hat v^{(\N+1)} = \av_{12}^{(\N+1)} \, \KB{\{12\}} + \sum_{\G \subseteq[ \N+1], \G \not= \{12\}}\av_\G^{(\N+1)} \, \KB{\G}\ , \qquad\text{where}\quad \av_{12}^{(\N+1)} \not= 0\ .
\end{align}
Evaluating $\I_{12}$ of $\hat v^{(\N+1)}$ using \eqref{I2}, we obtain
\begin{align}\label{12coeff}
	\left.\I_{12}\right|_{\hat v^{(\N+1)}} = 2\, \av_{12}^{(\N+1)}\ .
\end{align}
Since 
$\hat v^{(\N+1)}$  is in $\CC^{\N+1}$, it must obey SA, which means that $\av_{12}^{(\N+1)} > 0$. 

Any entropy vector in $\CC^{\N+1}$ can be associated to some static asymptotically AdS geometry by the arguments  
of \cite{Bao:2015bfa}, which can in turn be represented by a discrete graph. The graph for a static geometry at a constant time slice is constructed by partitioning the spatial slice using minimal (RT) surfaces and associating a vertex to each resulting codimension-0 region. Regions adjoining the boundary yield the boundary vertices associated to the parties (including the purifier), while the remaining regions are represented by internal vertices. The links between the vertices correspond to pieces of the RT surfaces, and are endowed with a weight given by the corresponding surface area.%
\footnote{
It was shown in \cite{Bao:2015bfa,Cui:2018dyq} 
that such graphs must obey MMI. This is a necessary assumption, since in the general quantum entropy cone, not all inequalities besides SA are superbalanced. One notable example is strong subadditivity, whose purification (known as weak monotinicity) is not balanced.} Below, in \cref{f:graphrep} we have schematically drawn the graph representing $\hat v^{(\N+1)}$, with the edges given by weights $n_{\ii}$.\footnote{
Fig.~6 of \cite{Bao:2015bfa} describes several graph transformations that preserve all the discrete entropies, and we may require such transformations to obtain the graph in \cref{f:graphrep}. 
 }
The region $G$ in the graph is simply some unspecified set of legs and vertices, and the weights of the legs emanating from $\CA_\ii$ are denoted by $n_\ii$, which we can take to be $\S_\ii$ via the entropy-preserving graph transformations of \cite{Bao:2015bfa}.

%%%%%%%%%%%%%%%%%%%%%%%%%%%%%%%%
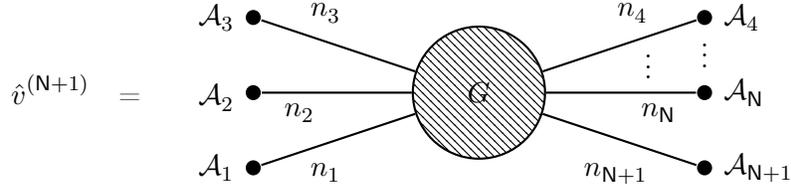
\begin{figure}[h]
\centering
  \begin{tikzpicture}
  [
       thick,
       acteur/.style={
         circle,
         fill=black,
         thick,
         inner sep=2pt,
         minimum size=0.2cm
       }
     ] 
    \begin{feynman}
      \vertex[blob,label={center:$G$},minimum size=50] (m) at ( 0, 0) {\contour{white}{}};
      \vertex (a1) at (-3,-1) [acteur,label=left:$\CA_{1}$]{};
      \vertex (a2) at (-3, 0) [acteur,label=left:$\hat v^{(\N+1)} \quad\text{=} \quad\quad  \CA_{2}$]{};
	\vertex (a3) at (-3,1) [acteur,label=left:$\CA_{3}$]{};
      \vertex (a4) at (3,1) [acteur,label=right:$\CA_{4}$]{};
      \vertex (dot) at (3, 0.6) {$\vdots$};
      \vertex (dot2) at (2.25, 0.45) {$\vdots$};
      \vertex (a5) at (3,0) [acteur,label=right:$\CA_{\N}$]{};
      \vertex (a6) at (3,-1) [acteur,label=right:$\CA_{\N+1}$]{} ;
      \diagram* {
        (a1) -- [edge label'=$n_{1}$,pos=0.25] (m),
        (a2) -- [edge label'=$n_2$,pos=0.25] (m),
	 (a3) -- [edge label=$n_3$,pos=0.25] (m),
	 (a4) -- [edge label'=$n_4$,pos=0.25] (m),
        (a5) -- [edge label=$n_{\N}$,pos=0.25] (m),
	 (a6) -- [edge label=$n_{\N+1}$,pos=0.25] (m) 
             };
    \end{feynman}
  \end{tikzpicture}
  \caption{The graph representation of the entropy vector $\hat v^{(\N+1)}$.}
  \label{f:graphrep}
\end{figure}
%%%%%%%%%%%%%%%%%%%%%%%%%%%%%

We now coarse-grain parties $\{\N,\N+1\} \to \{\N\}$ by simply joining legs connected to vertices $\N$ and $\N+1$. We can view the coarse-graining as a linear map $L$ acting on $\hat v^{(\N+1)}$, and the result is\footnote{
We can alternatively think of this as just relabeling $\CA_{\N+1}$ to be $\CA_\N$, so the new graph still corresponds to a holographic entropy vector.} 

%%%%%%%%%%%%%%%%%%%%%%%%%%%%%%%%
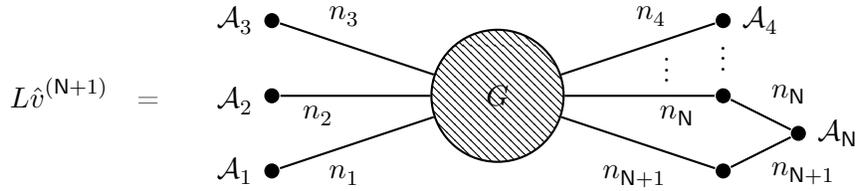
\begin{figure}[H]
\centering
  \begin{tikzpicture}
  [
       thick,
       acteur/.style={
         circle,
         fill=black,
         thick,
         inner sep=2pt,
         minimum size=0.2cm
       }
     ] 
    \begin{feynman}
      \vertex[blob,label={center:$G$},minimum size=50] (m) at ( 0, 0) {\contour{white}{}};
      \vertex (a1) at (-3,-1) [acteur,label=left:$\CA_{1}$]{};
      \vertex (a2) at (-3, 0) [acteur,label=left:$L\hat v^{(\N+1)} \quad\text{=} \quad\quad \CA_{2}$]{};
	\vertex (a3) at (-3,1) [acteur,label=left:$\CA_{3}$]{};
      \vertex (a4) at (3,1) [acteur,label=right:$\CA_{4}$]{};
      \vertex (dot) at (3, 0.6) {$\vdots$};
      \vertex (dot2) at (2.25, 0.45) {$\vdots$};
      \vertex (a5) at (3,0) [acteur]{};
      \vertex (a6) at (3,-1) [acteur]{} ;
      \vertex (b) at (4,-0.5) [acteur,label=right:$\CA_\N$]{};
      \diagram* {
        (a1) -- [edge label'=$n_{1}$,pos=0.25] (m),
        (a2) -- [edge label'=$n_2$,pos=0.25] (m),
	 (a3) -- [edge label=$n_3$,pos=0.25] (m),
	 (a4) -- [edge label'=$n_4$,pos=0.25] (m),
       (a5) -- [edge label=$n_{\N}$,pos=0.25] (m),
	 (a6) -- [edge label=$n_{\N+1}$,pos=0.25] (m),
	 (a5) --  [edge label=$n_{\N}$] (b),
	 (a6) -- [edge label'=$n_{\N+1}$] (b)
             };
    \end{feynman}
  \end{tikzpicture}
  \caption{We coarse-grain $\hat v^{(\N+1)}$ to obtain an entropy vector in $\CC^{\N}$ from one in $\CC^{\N+1}$.}
\end{figure}
%%%%%%%%%%%%%%%%%%%%%%%%%%%%%

\noindent This is now a graph representation of an entropy vector in $\CC^\N$. Explicitly, we have for some coefficients $\av_\G^{(\N)}$
\begin{align}\label{coarse}
	L\hat v^{(\N+1)} = \av_{12}^{(\N)} \, \KB{\{12\}} + \sum_{\G \subseteq [\N],\G \not= \{12\}}\av_\G^{(\N)} \, \KB{\G}\ ,
\end{align}
where the sum is now over all even-party PTs of the $\CC^\N$ cone, rather than the $\CC^{\N+1}$ cone. From the graphs, it is obvious that $\left.\I_{12}\right|_{\hat v^{(\N+1)}} = \left.\I_{12}\right|_{L\hat v^{(\N+1)}}$ since $\S_1$, $\S_2$, and $\S_{12}$ are the same before and after coarse-graining. Thus,
\begin{align}
	\av_{12}^{(\N)} = \frac{1}{2} \, \left.\I_{12}\right|_{L\hat v^{(\N+1)}} = \frac{1}{2} \, \left.\I_{12}\right|_{\hat v^{\N+1}} = \av_{12}^{(\N+1)}   > 0\ .
\end{align}
However, since $L\hat v^{(\N+1)} \in \CC^\N$, by the induction hypothesis this means  that none of the non-BP extreme rays in $\CC^N$ involve $\KB{\{12\}}$, and so the $\KB{\{12\}}$ term on the right-hand-side of \eqref{coarse} can be isolated as the $\{12\}$ Bell pair and correspondingly be rendered in the graph as simply an edge connecting $\CA_1$ to $\CA_2$ (as was shown in \cref{f:Kbasis}).
Because none of the edges outside $G$ join $\CA_1$ to $\CA_2$, $\KB{\{12\}}$ must lie within $G$. Thus, $G$ must be realizable as some graph $G'$ plus an edge joining $\CA_1$ and $\CA_2$ with weight $\av_{12}^{(\N+1)}$, i.e.

%%%%%%%%%%%%%%%%%%%%%%%%%%%%%%%%
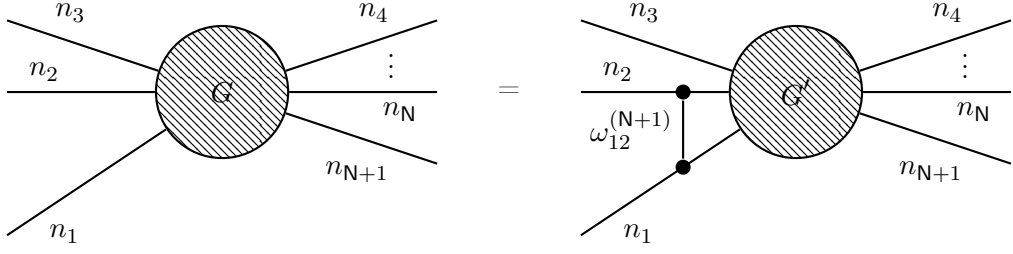
\begin{figure}[H]
\begin{center}
\begin{tikzpicture}
  [
       thick,
       acteur/.style={
         circle,
         fill=black,
         thick,
         inner sep=2pt,
         minimum size=0.2cm
       }
     ] 
    \begin{feynman}
      \vertex[blob,label={center:$G$},minimum size=50] (m) at ( 0, 0) {\contour{white}{}};
      \vertex (a1) at (-3,-2) []{};
      \vertex (a2) at (-3, 0) []{};
	\vertex (a3) at (-3,1) []{};
      \vertex (a4) at (3,1) []{};
      \vertex (dot2) at (2.25, 0.45) {$\vdots$};
      \vertex (a5) at (3,0) [label=right: \quad \text{=}]{};
      \vertex (a6) at (3,-1) []{} ;
      \diagram* {
        (a1) -- [edge label'=$n_{1}$,pos=0.2] (m),
        (a2) -- [edge label=$n_2$,pos=0.25] (m),
	 (a3) -- [edge label=$n_3$,pos=0.25] (m),
	 (a4) -- [edge label'=$n_4$,pos=0.25] (m),
       (a5) -- [edge label=$n_{\N}$,pos=0.25] (m),
	 (a6) -- [edge label=$n_{\N+1}$,pos=0.25] (m),
             };
    \end{feynman}
  \end{tikzpicture} 
  \begin{tikzpicture}
  [
       thick,
       acteur/.style={
         circle,
         fill=black,
         thick,
         inner sep=2pt,
         minimum size=0.2cm
       }
     ] 
    \begin{feynman}
      \vertex[blob,label={center:$G'$},minimum size=50] (m) at ( 0, 0) {\contour{white}{}};
      \vertex (a1) at (-3,-2) []{};
      \vertex (a2) at (-3, 0) [label=left:\quad]{};
	\vertex (a3) at (-3,1) []{};
      \vertex (a4) at (3,1) []{};
      \vertex (dot2) at (2.25, 0.45) {$\vdots$};
      \vertex (a5) at (3,0) []{};
      \vertex (a6) at (3,-1) []{} ;
      \vertex (c1) at (-1.5, -1) [acteur]{};
      \vertex (c2) at (-1.5,0) [acteur]{};
      \diagram* {
        (a1) -- [edge label'=$n_{1}$,pos=0.2] (m),
        (a2) -- [edge label=$n_2$,pos=0.25] (m),
	 (a3) -- [edge label=$n_3$,pos=0.25] (m),
	 (a4) -- [edge label'=$n_4$,pos=0.25] (m),
       (a5) -- [edge label=$n_{\N}$,pos=0.25] (m),
	 (a6) -- [edge label=$n_{\N+1}$,pos=0.25] (m),
	 (c1) -- [edge label= $\av_{12}^{(\N+1)}$] (c2)
             };
    \end{feynman}
  \end{tikzpicture}
  \end{center}
  \caption{Explicitly showing the BP in $G$.}
\end{figure}
%%%%%%%%%%%%%%%%%%%%%%%%%%%%%

\noindent Substituting this into our graph representation of $\hat v^{(\N+1)}$, we get

%%%%%%%%%%%%%%%%%%%%%%%%%%%%%%%%
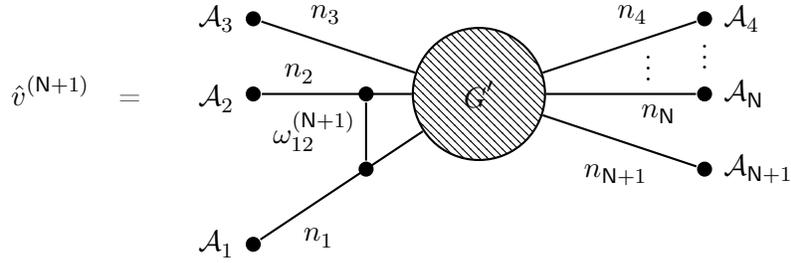
\begin{figure}[H]
\centering
  \begin{tikzpicture}
  [
       thick,
       acteur/.style={
         circle,
         fill=black,
         thick,
         inner sep=2pt,
         minimum size=0.2cm
       }
     ] 
    \begin{feynman}
      \vertex[blob,label={center:$G'$},minimum size=50] (m) at ( 0, 0) {\contour{white}{}};
      \vertex (a1) at (-3,-2) [acteur,label=left:$\CA_{1}$]{};
      \vertex (a2) at (-3, 0) [acteur,label=left:$\hat v^{(\N+1)} \quad\text{=} \quad\quad\CA_{2}$]{};
	\vertex (a3) at (-3,1) [acteur,label=left:$\CA_{3}$]{};
      \vertex (a4) at (3,1) [acteur,label=right:$\CA_{4}$]{};
      \vertex (dot) at (3, 0.6) {$\vdots$};
      \vertex (dot2) at (2.25, 0.45) {$\vdots$};
      \vertex (a5) at (3,0) [acteur,label=right:$\CA_\N$]{};
      \vertex (a6) at (3,-1) [acteur,label=right:$\CA_{\N+1}$]{} ;
      \vertex (c1) at (-1.5, -1) [acteur]{};
      \vertex (c2) at (-1.5,0) [acteur]{};
      \diagram* {
        (a1) -- [edge label'=$n_{1}$,pos=0.2] (m),
        (a2) -- [edge label=$n_2$,pos=0.25] (m),
	 (a3) -- [edge label=$n_3$,pos=0.25] (m),
	 (a4) -- [edge label'=$n_4$,pos=0.25] (m),
       (a5) -- [edge label=$n_{\N}$,pos=0.25] (m),
	 (a6) -- [edge label=$n_{\N+1}$,pos=0.25] (m),
	 (c1) -- [edge label= $\av_{12}^{(\N+1)}$] (c2)
             };
    \end{feynman}
  \end{tikzpicture}
  \caption{The graph of $\hat v^{(\N+1)}$ with edge representing the BP $\hat g^{\{12\}}$ explicitly shown.}
\end{figure}
%%%%%%%%%%%%%%%%%%%%%%%%%%%%%

\noindent As $\av_{12}^{(\N+1)} > 0$, this means we can also consider the entropy vector $\vec w$ given by the graph\footnote{
We are implicitly assuming here that $\vec{w}$ lies within the entropy cone since it has a graphical representation. This can be a subtle issue -- see  \cite{Marolf:2017shp} for a discussion.}

%%%%%%%%%%%%%%%%%%%%%%%%%%%%%%%%
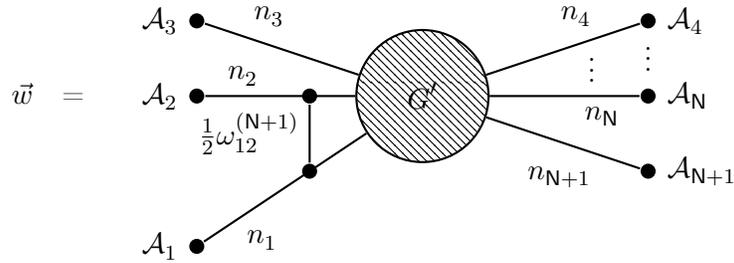
\begin{figure}[h]
\centering
  \begin{tikzpicture}
  [
       thick,
       acteur/.style={
         circle,
         fill=black,
         thick,
         inner sep=2pt,
         minimum size=0.2cm
       }
     ] 
    \begin{feynman}
      \vertex[blob,label={center:$G'$},minimum size=50] (m) at ( 0, 0) {\contour{white}{}};
      \vertex (a1) at (-3,-2) [acteur,label=left:$\CA_{1}$]{};
      \vertex (a2) at (-3, 0) [acteur,label=left:$\vec w \quad\text{=} \quad\quad \CA_{2}$]{};
	\vertex (a3) at (-3,1) [acteur,label=left:$\CA_{3}$]{};
      \vertex (a4) at (3,1) [acteur,label=right:$\CA_{4}$]{};
      \vertex (dot) at (3, 0.6) {$\vdots$};
      \vertex (dot2) at (2.25, 0.45) {$\vdots$};
      \vertex (a5) at (3,0) [acteur,label=right:$\CA_\N$]{};
      \vertex (a6) at (3,-1) [acteur,label=right:$\CA_{\N+1}$]{} ;
      \vertex (c1) at (-1.5, -1) [acteur]{};
      \vertex (c2) at (-1.5,0) [acteur]{};
      \diagram* {
        (a1) -- [edge label'=$n_{1}$,pos=0.2] (m),
        (a2) -- [edge label=$n_2$,pos=0.25] (m),
	 (a3) -- [edge label=$n_3$,pos=0.25] (m),
	 (a4) -- [edge label'=$n_4$,pos=0.25] (m),
       (a5) -- [edge label=$n_{\N}$,pos=0.25] (m),
	 (a6) -- [edge label=$n_{\N+1}$,pos=0.25] (m),
	 (c1) -- [edge label= $\frac{1}{2}\av_{12}^{(\N+1)}$] (c2)
             };
    \end{feynman}
  \end{tikzpicture}
  \caption{The graph of $\vec w$.}
\end{figure}
%%%%%%%%%%%%%%%%%%%%%%%%%%%%%

\noindent Because $\KB{\{12\}}$ is simply the graph connecting vertices $\CA_1$ and $\CA_2$ with weight $1$, it follows that
\begin{align}
	\vec w  + \frac{1}{2}\av_{12}^{(\N+1)}\KB{12} = \hat v^{(\N+1)}\ .
\end{align}
Since $\av_{12}^{(\N+1)} > 0$ and both $\KB{12}$ and $\vec w$ are in $\CC^{\N+1}$, this implies $\hat v^{(\N+1)}$ is a positive sum of two distinct vectors in the cone, contradicting the fact that it is an extreme ray. This completes the proof by induction.
\end{proof}

Having established Theorem~\ref{thm:main}, the superbalanced nature of the HEIs is a straightforward corollary.

\begin{coro}\label{maincor}
	All non-redundant HEIs that are not SA are superbalanced.
\end{coro}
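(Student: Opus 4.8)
The plan is to read off the corollary from Corollary~\ref{def:superbalK}, which already tells us that an information quantity $\Q$ is superbalanced precisely when its K-basis expansion contains no $\k_{\ii\jj}^{(\N)}$, i.e.\ when all Bell-pair coefficients $\lambda_{\ii\jj}$ vanish. Since the K-basis is orthonormal, the coefficient $\lambda_{\ii\jj}$ is just the value of the functional $\Q$ on the corresponding basis vector, $\lambda_{\ii\jj}=\Q\big(\KB{\{\ii\jj\}}\big)$; hence $\lambda_{\ii\jj}=0$ iff the Bell pair $\KB{\{\ii\jj\}}$ lies on the supporting hyperplane $H=\{\Q=0\}$ of the inequality. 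I would then interpret a non-redundant HEI (one not expressible as a sum of two other HEIs) as an extreme ray of the dual cone, equivalently a facet of the full-dimensional pointed cone $\CC^\N$. The target thus becomes: every Bell pair lies on the hyperplane of any non-redundant facet that is not SA.

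I would argue by contradiction. Suppose $\Q\ge0$ is a facet with $\lambda_{\ii\jj}=\Q\big(\KB{\{\ii\jj\}}\big)>0$ for some $\ii,\jj$, so $\KB{\{\ii\jj\}}\notin H$. The key input is Theorem~\ref{thm:main}: combined with orthonormality it shows that the SA hyperplane $H_{\ii\jj}=\{\k_{\ii\jj}^{(\N)}=0\}$ contains \emph{every} extreme ray of $\CC^\N$ except $\KB{\{\ii\jj\}}$ itself. Indeed, every higher PT and every other Bell pair has vanishing $\k_{\ii\jj}$-coordinate, while every non-BP extreme ray has vanishing $|\G|=2$ components by Theorem~\ref{thm:main}; only $\KB{\{\ii\jj\}}$, whose $\k_{\ii\jj}$-coordinate equals $1$, fails to lie in $H_{\ii\jj}$. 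Now a dimension count closes the loop: because $\Q\ge0$ is a facet of the $\D$-dimensional cone, the face $F=\CC^\N\cap H$ has dimension $\D-1$ and is generated by extreme rays of $\CC^\N$, so $H$ is spanned by $\D-1$ linearly independent extreme rays. By hypothesis none of these is $\KB{\{\ii\jj\}}$, so all of them lie in $H_{\ii\jj}$; as $H_{\ii\jj}$ is also $(\D-1)$-dimensional, these $\D-1$ independent vectors force $H=H_{\ii\jj}$, i.e.\ $\Q\propto\k_{\ii\jj}^{(\N)}$. But that is exactly SA, contradicting the assumption. Hence $\lambda_{\ii\jj}=0$ for all $\ii,\jj$, and by Corollary~\ref{def:superbalK} the inequality is superbalanced.

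The step I expect to require the most care is the polyhedral-geometry input in the last move: identifying a non-redundant HEI with a genuine facet and asserting that a facet carries $\D-1$ linearly independent extreme rays. This rests on the standard duality between extreme rays of the dual cone (the non-redundant inequalities) and facets of $\CC^\N$, and it is valid only because $\CC^\N$ is full-dimensional and pointed, so I would state those properties explicitly before invoking the count. Everything else follows directly from Theorem~\ref{thm:main} and the orthonormality of the K-basis, which is why the result is a straightforward corollary.
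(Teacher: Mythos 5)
Your proof is correct, but it takes a genuinely different route from the paper's. Both arguments reduce the claim, via Corollary~\ref{def:superbalK}, to showing that no non-redundant HEI other than SA involves $\k_{\ii\jj}^{(\N)}$, and both lean crucially on Theorem~\ref{thm:main}; the difference is where non-redundancy enters. The paper stays in the primal entropy space: non-redundancy (not being a sum of two other HEIs) guarantees an entropy vector in $\CC^\N$ on which the truncated quantity $\sum_{\G\neq\{12\}}\l_\G\,\k_\G^{(\N)}$ is negative; decomposing that vector into extreme rays, Theorem~\ref{thm:main} implies only the Bell pair $\KB{\{12\}}$ can carry a $\k_{12}$-coordinate, so stripping it off produces a vector that is simultaneously a positive sum of extreme rays (hence in $\CC^\N$) and a violator of the HEI (hence not in $\CC^\N$), a contradiction. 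You instead dualize: non-redundant HEI $=$ extreme ray of the dual cone $=$ facet of $\CC^\N$, and your dimension count --- the facet hyperplane $H$ is spanned by $\D-1$ linearly independent extreme rays, none of which can be $\KB{\{\ii\jj\}}$ when $\lambda_{\ii\jj}>0$, so by Theorem~\ref{thm:main} all of them lie in $H_{\ii\jj}$ --- forces $H=H_{\ii\jj}$, hence $\Q\propto\k_{\ii\jj}^{(\N)}$, i.e.\ $\Q$ is SA, contradiction. Your route buys a sharper geometric statement (SA is the \emph{unique} facet not containing a given Bell pair; every Bell pair lies on every non-SA facet), and it replaces the paper's construction of a violating vector by standard polyhedral duality; the cost is the extra machinery you rightly flag --- the facet/dual-extreme-ray correspondence and the fact that faces of a pointed polyhedral cone are conically generated by the extreme rays they contain, both requiring polyhedrality, full-dimensionality and pointedness of $\CC^\N$. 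Note that the paper's proof needs comparable input, since writing $\vec v_2$ as a positive sum of extreme rays already presumes a pointed polyhedral cone. One small simplification available to you: $\lambda_{\ii\jj}\ge 0$ follows immediately from $\KB{\{\ii\jj\}}\in\CC^\N$, so the ``nonzero implies positive'' step needs no appeal to the general nonnegativity result of \cite{He:2019ttu}.
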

\begin{proof}
	Consider the $\N$-party entropy space. By Corollary~\ref{def:superbalK}, it suffices to prove that all of the non-redundant HEIs except SA do not involve $\k_{\ii\jj}^{(\N)}$ for any $\ii,\jj \in [\N+1]$ when written in the K-basis. We will again proceed by contradiction.
	
Suppose that some non-redundant HEI involves $\k_{\ii\jj}^{(\N)}$. We can always relabel the parties so that the inequality involves $\k_{12}^{(\N)}$, i.e.
\begin{align}\label{primhei}
	\sum_{\G \not= \{12\}} \l_\G  \, \k_{\G}^{(\N)} + \l_{12} \, \k_{12}^{(\N)} \geq 0, \quad\text{where $\l_{12} \not= 0$\ .}
\end{align}
This inequality holds for all $\vec\S \in \CC^\N$. However, because it is non-redundant, there exists some $\vec\S \in \CC^\N$ with coefficients $\k_{\G}^{(\N)}$, which when expressed in the K-basis satisfies \eqref{primhei}, but
\begin{align}\label{primhei2}
	\sum_{\G \not= \{12\}}\l_{\G} \, \k^{(\N)}_{\G} < 0\ .
\end{align}
From \eqref{primhei} and \eqref{primhei2}, it follows that
\begin{align}\label{notincone}
\begin{split}
	\vec v_1 &\equiv \sum_{\G\not=\{12\}}  \, \k_{\G }^{(\N)}\KB{\G } \notin \CC^{\N} \\
	\vec v_2 &\equiv \sum_{\G \not=\{12\}}  \, \k_{\G }^{(\N)}\KB{\G } + \k_{12}^{(\N)}\KB{12} \in \CC^\N\ .
\end{split}
\end{align}
Because $\vec v_2 \in \CC^\N$, it must be a positive sum of extreme rays. However, by Theorem~\ref{thm:main}, none of the non-BP extreme rays involve $\KB{12}$, so this means $\vec v_1$ must also be a positive sum of extreme rays and hence lie $\CC^\N$. This is a contradiction, completing the proof.
\end{proof}

%~~~~~~~~~~~~~~~~~~~~~~~~~~~~~~~~~~~~~~~~~~~~~~
\section{Discussion} 
\label{s:discussion}
%~~~~~~~~~~~~~~~~~~~~~~~~~~~~~~~~~~~~~~~~~~~~~~

We have proved that excluding SA, all non-redundant HEIs are superbalanced. We demonstrated this by first proving that when we write each of the extreme rays of the HEC as a linear combination of the PTs, none of the non-BP extreme rays involve any BPs (see Theorem~\ref{thm:main}). Crucial to the proof of Theorem~\ref{thm:main} is the fact  that we can coarse-grain parties $\N$ and $\N+1$ without altering the internal structure of $G$. This is specific only to holographic entropy vectors expressible via a graph representation. In this sense, there is a notion of bulk locality regarding entanglement entropy, where we can change the RT surfaces near the boundary without affecting the RT surfaces elsewhere. It would be interesting to explore whether or not this ``bulk locality'' can be used to further understand the nature of the HEIs and thereby further constrain the HEC.

We would like to conclude by commenting that it is possible to reverse the arguments in the previous section to show the following. 

\begin{rmk}
	All non-redundant HEIs except for SA are superbalanced iff the non-BP extreme rays of the HEC $\CC^\N$, when expressed in the K-basis, do not involve any BPs.
\end{rmk}
\begin{proof}
	The reverse direction is proven by Corollary~\ref{maincor}, so it suffices to prove the forward direction. Suppose that all non-redundant HEIs excepting SA are superbalanced, and that (relabeling the parties if necessary) there exists an extreme ray $\vec v$ of $\CC^\N$ such that
	\begin{align}
	\vec v = \av_{12} \, \KB{\{12\}} + \sum_{\G \not= \{12\}} \av_\G \, \KB{\G}\ , \quad\text{where $\av_\G \in \mathbb R$ and $\av_{12} \not= 0$\ .}
\end{align}
Note that $\I_{12}|_{\vec v} = \av_{12}$, so $\av_{12} \geq 0$ by SA. Furthermore, as $\vec v$ is an extreme ray, it cannot be written as a sum of two other vectors in the HEI $\CC^\N$. In particular, $\av_{12}\, \KB{12} \in \CC^\N$, so
\begin{align}
	\vec w \equiv \sum_{\G \not= \{12\}} \av_\G \, \KB{\G} \notin \CC^\N\ .
\end{align}
On the other hand, since $\vec w = \vec v - \av_{12}\, \KB{12}$, $\vec w$ obeys any HEI not involving $\k_{12}^{(\N)}$ since $\vec v$ obeys it. By \eqref{eq:SAK} and Corollary~\ref{def:superbalK}, the only HEI involving $\k_{12}^{(\N)}$ is the SA $\k_{12}^{(\N)} \geq 0$, and obviously $\vec w$ obeys this SA as well. Thus, $\vec w$ obeys all the non-redundant HEIs and hence must lie in $\CC^\N$, a contradiction.
\end{proof}

This demonstrates that the superbalanced nature of the non-redundant multipartite HEIs is equivalent to the fact that the only extreme rays of the HEC with non-vanishing monochromatic mutual information are the BPs. It is not a-priori obvious why this had to be the case, and leaves open the question of whether extreme rays of the HEC have other simple properties as well. For instance, a quick examination of Table~\ref{t:N=5} shows that none of the other HEIs for $\N=5$ are 3-balanced, and it is natural to wonder this is a general feature among all HEIs, or whether there are certain classes of HEIs that are $\R$-balanced for some $\R \geq 3$.  Moreover, our arguments using graph models were valid in the static context where the RT prescription applies, whereas the work of  \cite{Hubeny:2018trv,Hubeny:2018ijt} suggests a deeper and more general structure, dubbed the holographic entropy arrangement and consists of primitive information quantities (the sign-definite subset of which forms the holographic entropy polyhedron).  Since all known primitive quantities excepting SA are superbalanced (even the non-sign-definite ones), it would be useful to prove this result directly in the proto-entropy language.
We will leave this, as well as a deeper understanding of why HEIs are superbalanced, for future investigations.

%~~~~~~~~~~~~~~~~~~~~~~~~~~~~~~~~
\section*{Acknowledgements}
%~~~~~~~~~~~~~~~~~~~~~~~~~~~~~~~~
It is a pleasure to thank S.~Hern\'andez Cuenca and M.~Rota for illuminating discussions on the holographic entropy polyhedron and collaboration on related topics, M.~Headrick and M.~Walter for stimulating earlier discussions, and M.~Rota for useful comments on the draft.
We would like to thank KITP, UCSB for hospitality during the workshop ``Gravitational Holography'', where the research was supported in part by the National Science Foundation under Grant No.~NSF PHY1748958 to the KITP. 
The authors are supported  by U.S.\ Department of Energy grant DE-SC0009999 and by funds from the University of California. 

%~~~~~~~~~~~~~~~~~~~~~~~~~~~~~~~~
\appendix

%~~~~~~~~~~~~~~~~~~~~~~~~~~~~~~~~
\section{Conversion between I and K bases}
\label{a:IKconf}
%~~~~~~~~~~~~~~~~~~~~~~~~~~~~~~~~

In \cref{ss:Iofg} we presented an argument relating the entropy vectors in the I  and K bases. We now give a more concrete translation between the two bases. This provides an alternative (but essentially equivalent) proof of Corollary~\ref{def:superbalK}, which states that an HEI is superbalanced iff it does not contain $\k_{\ii\jj}^{(\N)}$ in the K-basis. 

Consider \eqref{eq:IfromK}, which we reproduce here for convenience:
\begin{align}
	\I_\pI = \sum_{\G \subseteq [\N+1]} \alpha_{\pI}{}^{\gI } \, \k_\G\ .
\end{align}
We want to determine explicitly the entries of the matrix $\a$. Recalling \eqref{eq:alphadef}, which states that $\a_\pI{}^{\gI} = \left.\I_\pI\right|_{\hat g^\G}$, and letting $|\pI \cap \G| = m$, we obtain via combinatorics
\begin{align}\label{ItoK}
\begin{split}
	\left.\I_\pI\right|_{\KB{\G}} = \sum_{\pI' \subseteq \pI} (-1)^{|\pI'|+1} \, \left.\S_{\pI'}\right|_{\KB{\G}} &= \sum_{k=1}^{|\pI|}\sum_{r=1}^k (-1)^{k+1}  \, \min(r,|\G|-r)\binom{m}{r}\binom{|\pI|-m}{k-r} \\
	&= \sum_{r=1}^{|\pI|} (-1)^r \, \min(r,|\G|-r)\binom{m}{r} \sum_{k=0}^{|\pI|-r} (-1)^{k+1} \binom{|\pI|-m}{k}\ .
\end{split}
\end{align}
In particular, note that for $|\pI| > m$, we have
\begin{align}
\begin{split}
	\left.\I_\pI\right|_{\KB{\G}} = \sum_{r=1}^{|\pI|} (-1)^r \, \min(r,|\G|-r)\binom{m}{r} \sum_{k=0}^{|\pI|-m} (-1)^{k+1} \binom{|\pI|-m}{k} = 0\ ,
\end{split}
\end{align}
where we used the fact the second sum vanishes when $|\pI| - m > 0$, whereas for
$|\pI| = m$, we have
\begin{align}
\begin{split}
	\left.\I_\pI\right|_{\KB{\G}} &= \sum_{r=1}^{|\pI|} (-1)^{r+1}  \, \min(r,|\G|-r)\binom{|\pI|}{r}\ .
\end{split}
\end{align}
To summarize, it follows
\begin{align}\label{zeros}
	\a_{\pI}{}^\gI = \left.\I_\pI\right|_{\KB{\G}} = \begin{cases}
	0 & |\pI| > m \\
	\sum_{r=1}^{|\pI|} (-1)^{r+1}  \, \min(r,|\G|-r)\binom{|\pI|}{r} & |\pI| = m\ .
\end{cases}
\end{align}
This is precisely equivalent to \eqref{eq:Iofgmincard}, and can be used to prove the forward direction of Corollary~\ref{def:superbalK}.

We can also translate an HEI in K-basis to I-basis. In particular, given
\begin{align}\label{Iineq2}
	\sum_{\gI \subseteq [\N+1]} \lambda_\G \, \k^{(\N)}_\G = \sum_{\pI \subseteq [\N]} \sigma_\pI\I_\pI \geq 0 \ ,
\end{align}
we can determine $\sigma_\pI$ in terms of $\lambda_\G$ using the following procedure.
Consider now the entropy vector $\KB{\{i(\N+1)\}}$, where $i \in [\N]$. Using \eqref{zeros}, we observe that for any $\pI \subseteq [\N]$,
\begin{align}
	\left.\I_\pI\right|_{\KB{\{i(\N+1)\}}} = \begin{cases}
	0 & \pI \not= i \\
	1 & \pI = i\ .
\end{cases}
\end{align}
Thus, evaluating both sides of \eqref{Iineq2} on $\KB{\{i(\N+1)\}}$, we obtain for all $i \in [\N]$,
\begin{align}\label{con2}
	\sigma_i = \lambda_{i(\N+1)}\ .
\end{align}

Next, consider the entropy vector $\KB{\{ij\}}$, where $i,j \in [\N]$. Again using \eqref{zeros}, we observe that for any $\pI \subseteq [\N]$ and $i,j \in [\N]$,
\begin{align}
	\left.\I_\pI\right|_{\KB{\{ij\}}} = \begin{cases}
	1 & \pI = \text{$i$ or $j$} \\
	2 & \pI = \{i,j\} \\
	0 & \text{otherwise}.
\end{cases}
\end{align}
Evaluating both sides of \eqref{Iineq2} on this $\KB{\{ij\}}$, we obtain
\begin{align}\label{con1}
	\sigma_i + \sigma_j + 2\sigma_{ij} = \lambda_{ij}\ .
\end{align}
By \eqref{con2}, this implies $\sigma_{ij} = \frac{1}{2}\left(\lambda_{ij} - \lambda_{i(\N+1)} - \lambda_{j(\N+1)}\right)$ for all $i,j \in [\N]$. Indeed, note that if $\lambda_{\ii\jj} = 0$ for all $\ii,\jj \in [\N+1]$, this immediately implies $\sigma_i = \sigma_{ij} = 0$ for all $i,j \in [\N]$, i.e. the HEI is superbalanced. This proves the reverse direction of Corollary~\ref{def:superbalK}.

We can continue the above procedure to write the $\sigma_\pI$'s in terms of $\lambda_\gI$'s by considering the entropy vectors $\KB{\{i_1\cdots i_{2k-1}(\N+1)\}}$ and $\KB{\{i_1\cdots i_{2k}\}}$ for all $k \leq n$, where $i_1,\ldots,i_{2k} \in [\N]$. By repeatedly evaluating \eqref{Iineq2} on these PTs and using \eqref{zeros}, we are able to write linear equations relating $\sigma_\pI$ and $\lambda_\gI$, in the same spirit as \eqref{con2} and \eqref{con1}.  

%~~~~~~~~~~~~~~~~~~~~~~~~~~~~~~~~

\providecommand{\href}[2]{#2}\begingroup\raggedright\endgroup

\end{document}